\newtheorem{Thm}{Theorem}[section]
\newtheorem{Lem}[Thm]{Lemma}
\newtheorem{Cor}[Thm]{Corollary}
\newtheorem{Def}[Thm]{Definition}
\icmltitlerunning{Approximate MAP Estimation for Pairwise Potentials via Baker's Technique}
\begin{document} 

\twocolumn[
\icmltitle{Approximate MAP Estimation for Pairwise Potentials via Baker's Technique}

\icmlauthor{Yi-Kai Wang}{wangyikai1029@gmail.com}

\icmlkeywords{Graphical Model, MAP estimation}

\vskip 0.3in
]

\begin{abstract} 
Graphical models with pairwise correlations between variables are widely used to model optimization problems in machine learning and other fields. The structures of these optimization problems can be encoded as potential functions attached on the vertices of the input graph. Then maximum a posterior (MAP) estimation is equivalent to maximizing or minimizing the energy function, i.e. sum of the potential functions. 
We show that if the potentials are nonnegative, then maximizing the energy admits efficient polynomial-time approximation schemes (EPTAS) on planar graphs, bounded-local-treewidth graphs, $H$-minor-free graphs and bounded-crossing-number graphs.
Our EPTAS can be applied to various significant optimization problems in machine learning, data mining, computer vision, combinatorial optimization and statistical physics.
We also prove that approximation algorithm does not exist for minimization even if the potentials are nonnegative and the input graph is planar.
Our method is a simple extension of \textit{Baker's Technique} and consequently it also generalizes a series of related works proposed over the last three decades.
\end{abstract}

\section{Introduction}
\label{section:introduction}

\subsection{The Model}
\label{subsection:model}

Graphical models with pairwise potentials are widely used for research of machine learning. Its maximum a posterior (MAP) estimation plays a key role for many learning-as-optimization tasks.
The structures of several significant optimization problems in machine learning and some other fields can be encoded as potential functions attached on the vertices and edges of a graph whose vertices represent variables and edges represent correlations between the variables. 
Formally, the input graph is denoted by $G(V,E)$ where $V$ denotes the set of vertices and $E$ denotes the set of edges. 
Each vertex $i \in V$ represents a variable $\sigma_i \in [q]$ that can take $q$ different values and is attached
by a \textit{vertex potential function} $\psi_i: [q]\to\mathbb{R}$. 
Each edge $(i,j) \in E$ is attached by an \textit{edge potential function} $\psi_{ij}: [q]\times[q]\to\mathbb{R}$ which takes the values of $\sigma_i$ and $\sigma_j$ as inputs. A joint value assignment $\sigma \in [q]^V$ is called a \textit{configuration}. The \textit{energy function} $E(\sigma)$ is defined as
$E(\sigma) = \sum_{i \in V}\psi_i(\sigma_i)+\sum_{(i,j)\in E}\psi_{ij}(\sigma_i,\sigma_j)$.
The following well-known optimization problems can be reduced to finding a configuration $\sigma$ maximizing or minimizing $E(\sigma)$. 

\noindent\textbf{MAP Estimation}: 
Maximum a posterior (MAP) estimation on graphical models is a fundamental problem in machine learning. Given a pairwise Markov random field, the Gibbs measure of configuration $\sigma \in [q]^V$ is usually defined as $P(\sigma) = \frac{1}{Z}\exp(-E(\sigma))$ where $Z = \sum_{\sigma \in [q]^V}\exp(-E(\sigma))$ is the \textit{partition function}. The goal of MAP estimation is finding the configuration $\sigma$ that maximizes $P(\sigma)$. This is equivalent to minimizing the energy function $E(\sigma)$. The definition of Gibbs measure follows the convention of statistical physics. In many cases, we also let $P(\sigma) = \frac{1}{Z}\exp(E(\sigma))$ where $Z = \sum_{\sigma \in [q]^V}\exp(E(\sigma))$, then MAP estimation is equivalent to maximizing the energy function.
This problem is exactly solvable if the input graph is an acyclic graph.
In general, it is NP-hard and few provable approximation bounds have been achieved.

\noindent\textbf{Correlation Clustering}:
Correlation clustering, motivated by document clustering and agnostic learning, provides a method for partitioning data points into clusters based on their similarities. 
It has been commonly used in machine learning and data mining.
The model originally proposed in~\cite{bansal2004correlation} is a complete graph $K_n$ whose vertices represent data points. Each edge has weight either $+1$ (similar) or $-1$ (different) to measure the similarity  of two vertices.
The solutions consists of two scenarios: maximizing agreements (maximizes sum of positive weights in clusters plus sum of absolute values of negative weights between clusters) or minimizing disagreements (minimizes sum of absolute values of negative weights in clusters plus sum of positive weights between clusters).
Both of them are NP-complete.
In~\cite{bansal2004correlation}, a PTAS is given for maximizing agreements and a constant factor approximation algorithm is given.
For general graphs $G(V,E)$ with real-valued weights $w: E \to \mathbb{R}$, an $O(\log n)$-approximation is given for minimizing disagreements and a 0.7664-approximation is given for maximizing agreements in~\cite{charikar2003clustering}. It is also proved that maximizing agreements is APX-hard and minimizing disagreements is APX-hard on complete graphs. 
Later in~\cite{swamy2004correlation}, a 0.766-approximation algorithm is given for maximizing agreements via semidefinite programming. The approximation ratio 0.766 also holds for $q$-clustering variant where the number of clusters is at most $q$.
Let $\psi_i = 0$.  
If $w_{ij} > 0$, we set $\psi_{ij} = w_{ij}$ if $\sigma_i = \sigma_j$ and $\psi_{ij} = 0$ otherwise. 
If $w_{ij} < 0$, we set $\psi_{ij} = -w_{ij}$ if $\sigma_i \ne \sigma_j$ and $\psi_{ij} = 0$ otherwise. 
Then maximizing agreements for $q$-clustering is equivalent to maximizing the energy function.
Minimizing disagreements for $q$-clustering can be reduced to minimizing the energy function via a similar reduction.

\noindent\textbf{MAX Graph-cuts}: 
Given an undirected graph $G(V,E,w)$ where $w:E \to \mathbb{R}^+$ assign a nonnegative weight $w_{ij}$ to each edge $(i,j) \in E$. The goal of MAX-CUT problem is dividing the vertices of $G$ into two sets $S$ and $\bar{S}$ such that the value $\sum_{(i,j) \in C(S,\bar{S})} w_{ij}$ is maximum where $C(S,\bar{S})$ is the set of cut edges between $S$ and $\bar{S}$.
Its directed-graph version is MAX-DICUT problem whose input is a directed graph $G$ and whose goal is dividing the vertices of $G$ into two sets $S$ and $\bar{S}$ such that the total weight of the directed cut $\sum_{i \in S, j \in \bar{S}, (i,j) \in E} w_{ij}$ is maximum. 
Let $\psi_i = 0$.
For MAX-CUT, we let $\psi_{ij}(\sigma_i,\sigma_j) = w_{ij}$ if $\sigma_i \ne \sigma_j$ and $\psi_{ij}(\sigma_i,\sigma_j) = 0$ otherwise. For MAX-DICUT, we let $\psi_{ij}(\sigma_i,\sigma_j) = w_{ij}$ if $\sigma_i = 1 \wedge \sigma_j = 0$ and $\psi_{ij}(\sigma_i,\sigma_j) = 0$ otherwise. Then computing the maximum cuts is equivalent to maximizing the corresponding energy function.
The best known approximation ratio for MAX-CUT is $\alpha = \frac{2}{\pi}\min_{0 \le \theta \le \pi}\frac{\theta}{1-\cos \theta} \approx 0.878$ discovered in~\cite{goemans1995improved} using semidefinite programming and randomized rounding.
In~\cite{khot2007optimal}, it is shown that this is the best possible approximation ratio for MAX-CUT if the unique game conjecture~\cite{khot2002power,khot2005unique} is true. 
In~\cite{barahona1988application}, minimizing the number of vias (holes on a printed circuit board) for very-large-scale-intergrated (VLSI) circuit design is reduced to computing MAX-CUT.

\noindent\textbf{Statistical Physics}: 
Spin system is a theoretical model for studying properties like ferromagnetism and phase transition.
Edwards-Anderson model is a widely accepted description of the spin systems. 
The input graph is usually a $d$-dimensional lattice graph $\mathbb{L}^d$. Each vertex $i$ in the lattice is a Ising spin $\sigma_i \in \{-1, +1\}$.
The energy functions is $E(\sigma) = \sum_{(i,j) \in E}J_{ij}\sigma_i\sigma_j+B\sum_{i \in V}\sigma_i$ where $J_{ij}$ are exchange couplings and the second part of the sum represents the external magnetic field.
The interactions between spins $\sigma_i$ and $\sigma_j$ is ferromagnetic if $J_{ij} > 0$ and antiferromagnetic if $J_{ij} < 0$.
When $J_{ij} > 0$ and $B = 0$, the energy function becomes $E(\sigma) = \sum_{(i,j) \in E}J_{ij}\sigma_i\sigma_j = C - 2\sum_{(i,j) \in E,\sigma_i \ne \sigma_j}J_{ij}$.
Then computing the ground state is equivalent to computing the maximum weighted cut of the input graph.
This problem is NP-hard even on 3-dimensional lattice graphs~\cite{mezard2009information}.
Techniques such as branch and bound methods, belief propagation have been applied but few provable bounds have been achieved.

\noindent\textbf{Computer Vision}: 
The theoretical models of statistical physics are widely used in computer vision. The pixel values are analogous to states of atoms in a lattice-like spin system.
The $\psi_i(\cdot)$ part of the energy function measures the disagreement between the label and observed value at pixel $i$.
The $\psi_{ij}(\cdot,\cdot)$ part measures the pairwise smoothness between pixel $i$ and pixel $j$.
The likelihood of a label configuration $\sigma$ is measured by the probability $P(\sigma) = \frac{1}{Z(\beta)}\exp(-\beta E(\sigma))$ where $\beta$ is a free parameter called \textit{inverse temperature} and $Z(\beta) = \sum_{\sigma \in [q]^V}\exp(-\beta E(\sigma))$ is the normalizing factor.
The typical $\psi_i(\cdot)$ usually takes the form $\psi_i(\sigma_i) = (\sigma_i - p_i)^2$~\cite{boykov2001fast} where $p_i$ is the observed value of pixel $i$. The typical forms of $\psi_{ij}(\sigma_i,\sigma_j)$ are~\cite{boykov1998markov} generalized Potts model that $\psi_{ij}(\sigma_i,\sigma_j) = w_{ij}\cdot(1-\delta(\sigma_i, \sigma_j))$ where $w_{ij} \ge 0$ is a weight coefficient and $\delta(x) = 1$ when $x = 0$, and $\delta(x) = 1$ otherwise. 
Then computing the configuration with maximum likelihood is equivalent to minimizing the energy function.
This method has been widely used in various applications such as image restoration, image segmentation, texture synthesis and stereo vision.

\noindent\textbf{MAX 2-CSP}: 
Constraint satisfaction problems (CSP) are well investigated in artificial intelligence.
A CSP is defined as a tuple $\langle X, Q, C \rangle$ where $X = \{X_1,\ldots,X_n\}$ is a set of variables, $Q = \{Q_1,\ldots,Q_n\}$ is the set of respective value domains of $X$, $C = \{C_1,\ldots,C_m\}$ is a set of constraints. The variable $X_i$ can take the values in domain $Q_i$. Each constraint $C_\ell$ is a function which takes a subset $S_\ell \subseteq X$ of variables as inputs and returns a number $w_\ell \ge 0$ representing the weight of $C_\ell$ if it is satisfied or 0 otherwise. The most extensively studied CSP is its boolean version SAT.
A MAX-CSP asks for a configuration of variables such that the number of satisfied constraints is maximum. If each constraint takes exactly two variables as inputs, this problem is called MAX 2-CSP.
Although 2-SAT can be decided in polynomial time, MAX-2SAT is APX-hard~\cite{ausiello1999complexity}.
Given a MAX 2-CSP, we construct a graph $G(V,E)$ where each $X_i$ corresponds to a vertex in $V$ and $(i,j) \in E$ if $\exists \ell$ s.t. $S_\ell = \{X_i, X_j\}$.
We set $\psi_i = 0$ and $\psi_{ij}$ as
$\psi_{ij}(\sigma_i,\sigma_j) = \sum_{\ell:S_\ell = \{X_i, X_j\}} C_\ell$,
then solving this MAX 2-CSP is equivalent to maximizing the energy function of $G$.


For each function $\psi_{ij}$ attached on edge $(i,j) \in E$, let $\alpha_{ij}$ and $\alpha_{ji}$ be two constants satisfying 
$\alpha_{ij} + \alpha_{ji} = 1$.
We set $f_i = \psi_i + \sum_{j \in N(i)} \alpha_{ij} \cdot \psi_{ij}$, then computing the energy function $E(\sigma)$ is equivalent to computing $\sum_{i \in V} f_i$.
We do this because our approximation algorithm holds for $f_i \ge 0$, which is necessary but not sufficient for $\psi_i \ge 0$ and $\psi_{ij} \ge 0$.
By this transformation, our approximation algorithm can be applied to a much larger domain of inputs that allows some $\psi_i$ and $\psi_{ij}$ to be negative.
In this paper, we assume that all $f_i$ are $O(1)$-time computable.

\subsection{Main Results}

A polynomial-time approximation scheme (PTAS) is an algorithm $\mathcal{A}(I,\epsilon)$ which takes an instance $I$ of an optimization problem and a parameter $\epsilon > 0$ and runs in time $n^{O(f(1/\epsilon))}$ which produces a solution that is at least $(1-\epsilon)$ optimal for maximization and at most $(1+\epsilon)$ optimal for minimization. A PTAS with running time $f(1/\epsilon) \cdot n^{O(1)}$ is called an efficient polynomial time approximation scheme (EPTAS). An EPTAS where $f(1/\epsilon)$ is polynomial in $1/\epsilon$ is a fully polynomial time approximation scheme (FPTAS).

Our contributions are two-fold. One positive result of efficient approximation algorithm for maximizing $\sum_{i \in V} f_i$ and one negative result of inapproximability property for minimizing $\sum_{i \in V} f_i$. The positive one is given as follows.

\begin{Thm}\label{thm:main-energy}
If the functions $f_i$ derived from $\psi_i$ and $\psi_{ij}$ satisfies $f_i \ge 0$ for all $i \in V$, then computing $\max_{\sigma \in [q]^V}\sum_{i \in V} f_i$ admits EPTASs on planar graphs, bounded-local-treewidth graphs, $H$-minor-free graphs and bounded-crossing-number graphs.
\end{Thm}

The time complexity of our EPTAS is $O\left(q^{O\left(\frac{1}{\epsilon}\right)}\cdot \frac{n}{\epsilon}\right)$. Then we have the following corollary.

\begin{Cor}
Given a fixed error $0<\epsilon<1$, if the functions $f_i$ derived from $\psi_i$ and $\psi_{ij}$ satisfies $f_i \ge 1$ for all $i \in V$, then computing the max-product can be approximated to at least 
$\left(\max_{\sigma \in [q]^V}\prod_{i \in V}\psi_i\prod_{(i,j) \in E}\psi_{ij}\right)^{1-\epsilon}$
with time complexity $O\left(q^{O\left(\frac{1}{\epsilon}\right)}\cdot \frac{n}{\epsilon}\right)$
on planar graphs, bounded-local-treewidth graphs, $H$-minor-free graphs and bounded-crossing-number graphs.
\end{Cor}
To our knowledge, such provable bound for computing the max-product have not been known for other methods such as belief propagation and its variant versions.

By the reductions listed in Section~\ref{subsection:model}, we also have the following corollary.
\begin{Cor}\label{cor:main-csp}
Computing MAX 2-CSP, MAX-CUT, MAX-DICUT, MAX $k$-CUT, maximizing agreements for $q$-clustering and computing the ground state of ferromagnetic Edwards-Anderson model without external magnetic field admits EPTASs on planar graphs, graphs with bounded local treewidth, $H$-minor-free graphs and bounded-crossing-number graphs.
\end{Cor}

On planar graphs, MAX-CUT is polynomial-time solvable~\cite{hadlock1975finding}.
The PTAS for MAX-CUT on $H$-minor-free graphs is given in~\cite{demaine2005algorithmic}. 
MAX $k$-CUT is a natural generalization of MAX-CUT. 
Given a connected undirected graph $G(V,E,w)$ where $w:E \to \mathbb{R}^+$ assigns a nonnegative weight $w_{ij}$ to each edge $(i,j) \in E$, a $k$-cut is a set of edges $E' \subseteq E$ whose removal decomposes the input graph into $k$ disjoint subgraphs.
The goal of MAX $k$-CUT problem is computing such a set of edges $E'$ that $\sum_{e \in E'} w_e$ is maximum. 
When $k = 2$, MAX $k$-CUT problem is MAX-CUT. If $k > \Delta$ where $\Delta$ is the maximum degree of $G$, the optimal solution is precisely the sum of all the weights of the edges. Thus this problem is only interesting when $k \le \Delta$.
Choosing a set of terminals $S = \{s_1, \ldots, s_k\} \subseteq V$, the configuration of $s_t$ ($1 \le t \le k$) is fixed to $t \in [q]$.
The vertices in graph $G' = G \backslash S$ are free variables. The functions attached on $(i,j) \in E$ are the same as those of MAX-CUT.
Then computing the MAX $k$-CUT for fixed $S$ is equivalent to computing $\max_{\sigma \in [k]^{V(G')}}\sum_{(i,j) \in E} \psi_{ij}$. By Theorem~\ref{thm:main-energy}, we have a EPTAS to compute MAX $k$-CUT for fixed $S$. There are at most $P(n,k) = \frac{n!}{(n-k)!} \le n^k$ possibilities for $S$, which is polynomial if $k$ is fixed. By enumerating all these cases, we have a EPTAS for MAX $k$-CUT.

Unfortunately, we have the following inapproximability result for minimization.
\begin{Thm}
Even if $f_i \ge 0$ for all $i \in V$, there does not always exist PTAS for minimizing the energy function even on planar graph unless P = NP.
\end{Thm}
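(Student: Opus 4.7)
The plan is to reduce from planar $3$-colorability, which is NP-hard (Garey, Johnson, and Stockmeyer). The key observation is that this decision problem corresponds to an energy-minimization instance whose optimum is zero precisely on the yes-instances, and any polynomial-time algorithm achieving a bounded multiplicative approximation ratio for a nonnegative, integer-valued minimization objective is forced to be exact whenever $\mathrm{OPT}=0$.

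Given a planar graph $G(V,E)$, I would build the energy instance on the same graph $G$ with $q = 3$, vertex potentials $\varphi_i \equiv 0$, and edge potentials
\[
\varphi_{ij}(\sigma_i, \sigma_j) \;=\; \begin{cases} 1, & \sigma_i = \sigma_j, \\ 0, & \sigma_i \neq \sigma_j, \end{cases}
\]
for every $(i,j) \in E$. All $\varphi_i$ and $\varphi_{ij}$ are nonnegative, so the instance lies in exactly the class to which any hypothetical PTAS would apply. For any configuration $\sigma \in [3]^V$, $E(\sigma)$ equals the number of monochromatic edges under the coloring $\sigma$, so $\min_{\sigma} E(\sigma) = 0$ if and only if $G$ admits a proper $3$-coloring.

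Suppose, for contradiction, that a PTAS exists for minimizing $E$ on planar graphs. Fix $\epsilon = 1/2$, run the PTAS on the constructed instance, and let $V$ be the value of the returned configuration. The PTAS guarantee yields $\mathrm{OPT} \le V \le (1+\epsilon)\mathrm{OPT}$; since every $\varphi_{ij}$ takes values in $\{0,1\}$, both $V$ and $\mathrm{OPT}$ are nonnegative integers, and therefore $V = 0$ if and only if $\mathrm{OPT} = 0$. The PTAS thus either produces a proper $3$-coloring of $G$ or certifies that none exists, which decides planar $3$-colorability in polynomial time and forces $\mathrm{P} = \mathrm{NP}$.

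There is no genuine technical obstacle in this plan; the one subtle point is the behaviour of the PTAS at $\mathrm{OPT}=0$, which is handled by the sandwich $0 \le V \le (1+\epsilon)\cdot 0 = 0$. One could equally well reduce from other integer-valued, planar NP-hard problems (for instance planar NAE-$3$-SAT), but planar $3$-colorability yields the cleanest argument and directly reuses the nonnegativity hypothesis under which the positive results of Theorem~\ref{thm:main-energy} were established.
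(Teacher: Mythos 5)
Your proof is correct, but it takes a different route from the paper's. The paper reduces from approximating the chromatic number of planar graphs: it attaches to each vertex a function $f_i$ that takes the value $w_x$ (with $w_x > n\cdot w_{x-1}$, so the weights are exponentially separated) when $\sigma_i = x$ and all neighbours disagree, and $+\infty$ when some neighbour agrees; the min-sum then lands in one of the pairwise-disjoint intervals $[w_x, n\,w_x]$, so a PTAS would pin down the chromatic number, which is NP-hard to approximate within $4/3$ on planar graphs by the four-colour theorem together with the NP-completeness of planar $3$-colouring. You instead use the standard ``$\mathrm{OPT}=0$'' trick: with $q=3$, $\varphi_i \equiv 0$ and $\varphi_{ij} = \mathbbm{1}[\sigma_i = \sigma_j]$, the minimum energy is zero iff $G$ is $3$-colourable, and any multiplicative guarantee collapses to exactness at $\mathrm{OPT}=0$. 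Your version is more elementary and in some respects stronger and cleaner: it stays entirely within the nonnegative pairwise energy form of Equation~(\ref{eq:energy}) (the same hypotheses under which Theorem~\ref{thm:main-energy} gives a PTAS for maximization), avoids infinite penalties and exponentially growing weights, and rules out not just a PTAS but any polynomial-time constant-factor approximation. What the paper's heavier construction buys is a family of instances exhibiting arbitrarily large approximation gaps (by widening the intervals) and an explicit illustration that the hard instances violate the balance condition, which motivates the subsequent positive result (Theorem~\ref{thm:min-sum}) for $\alpha$-balanced functions; note that your instance is also unbalanced (the balancer would have to be $b_i = 0$), so it is consistent with that result.
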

We construct a reduction from computing the chromatic number to energy minimization. We let
\begin{equation*}
f_i =
\begin{cases}
w_x & \text{ if } \sigma_i = x \in [q]
\text{ and } \sigma_i \ne \sigma_j \text{ for } \forall (i,j) \in E \\
+\infty & \text{ if } \sigma_i = \sigma_j \text{ for any } (i,j) \in E
\end{cases}
\end{equation*}
where $w_x$ satisfies $w_x > n \cdot w_{x-1}$.
This is allowed because both $G$ and the set of functions $f_i$ serve as inputs. 
If the graph $G$ is $x$-colorable, then the value of the min-sum $S_{\min} = \min_{\sigma}\sum_{i \in V}f_i$ must fall into the interval $[w_x, w_x \cdot n]$. For any $x,y \in [q]$, $[w_x, w_x \cdot n]$ and $[w_y, w_y \cdot n]$ are pairwise disjoint.
By four color theorem, any planar graph is 4-colorable. Furthermore, it is known that 3-coloring problem remains NP-complete even on planar graph of degree 4~\cite{dailey1980uniqueness}. It implies that it is NP-hard to approximate the chromatic number within approximation ratio $4/3$ even on planar graphs.
Therefore if we have a PTAS for computing the min-sum of $f_i$ on planar graphs, then we have a polynomial time algorithm for computing the chromatic number of planar graphs, which leads to a contradiction.
It implies that there also does not exist PTAS for computing the min-sum on other classes of graphs we will discuss.
Actually, this proof is not only about PTAS. If we set $w_x > g(n) \cdot n \cdot w_{x-1}$, then there does not exist $g(n)$-approximation algorithm for minimizing the energy function.
By enlarging the gaps between the disjoint intervals, we can achieve stronger inapproximability results.

\subsection{Content Organization}

This paper is organized as follows. In Section~\ref{section:preliminary}, we introduce some basic concepts that will be used throughout this paper. In Section~\ref{section:baker}, we gives a concise description of Baker's technique and an overview of the proof of Theorem~\ref{thm:main-energy}.
In Section~\ref{section:decomposition}, we introduce graph decomposition techniques for graph classes mentioned in Theorem~\ref{thm:main-energy}.
Combining the content of Section~\ref{section:baker} and Section~\ref{section:decomposition} will give a complete proof of Theorem~\ref{thm:main-energy}.

\section{Preliminaries}
\label{section:preliminary}


\subsection{Planar graph}

A graph $G$ is a planar graph if it can be embedded into the two-dimensional plane such that no pair of edges will cross with each other.
Given a planar graph, its planar embedding can be generated in linear time. 
A planar graph is an outerplanar graph if it has a planar embedding where all the nodes are on the exterior face. 
Given a planar embedding of a planar graph, a node is at level 1 if it is on the exterior face. 
When all the level-1 nodes are deleted from the planar embedding, the nodes on the exterior face are called level-2 nodes. 
By this induction the level-$k$ nodes can be defined. 
A planar graph is a $k$-outerplanar graph if it has a planar embedding with no nodes of level more than $k$.

\subsection{Tree decomposition and treewidth}

The concept of tree decomposition is defined to measure the similarity between a graph and a tree.
\begin{Def}\label{def:tree-decom}
A tree decomposition of an undirected graph $G(V,E)$ is a tuple $(\{X_i|i\in I\},T=(I,F))$ where $\{X_i|i\in I\}$ is a family of subsets of $V$ that each one corresponds to a node of $T$. $T$ is a tree such that
(1) $\bigcup_{i \in I} X_i = V$,
(2) for all edges $\{v,w\} \in E$, there exists an $i \in I$ with $v \in X_i$ and $w \in X_i$,
(3) for all $i,j,k \in I$: if $j$ is on the path from $i$ to $k$ in $T$, then $X_i \cap X_k \subseteq X_j$.
\end{Def}
Each node of the tree decomposition $T$ is called a \textit{bag}.
The third property of tree decomposition guarantees that for every $v \in V$, $\left\{ X_i: v \in X_i, i \in I \right\}$ induces a connected subtree of $T$.

\begin{Def}
The treewidth of a tree decomposition $(\{X_i|i \in T\},T=(I,F))$ is $\max_{i \in I}|X_i|-1$. The treewidth of a graph $G$, denoted by $tw(G)$, is the minimum treewidth over all tree decompositions of $G$.
\end{Def}
A tree decomposition of width equal to the treewidth is called an optimal tree decomposition. Computing the treewidth for graph $G$ is NP-complete. But given a graph $G$, deciding whether the treewidth of $G$ is at most a fixed constant $k$ can be decided in linear time by Bodlaender's algorithm~\cite{bodlaender1993linear}. If the answer is yes, then an optimal tree decomposition of $G$ can be constructed in linear time (but exponential in $k$).
The following lemma includes some well-known facts about treewidth.
\begin{Lem}\label{lem:treewidth}
Let $(\{X_i|i\in I\},T=(I,F))$ be a tree decomposition of graph $G$. Then
(1) If $X \subseteq V(G)$ is a clique, then there is an $i \in I$ that $X \subseteq X_i$.
(2) Let $G,H$ be graphs such that $V(G) \cap V(H)$ is a clique in both $G$ and $H$. Then it holds that $tw(G \cup H) = \max\{tw(G),tw(H)\}$.
(3) For any $X \subseteq V(G)$. Then $tw(G) \le tw(G \backslash X) + |X|$.
(4) Let $G,H$ be graphs such that $H \le_m G$. Then $tw(H) \le tw(G)$.
\end{Lem}

\subsection{Local treewidth}

The concept of local treewidth is first introduced by~\cite{eppstein2000diameter} as a generalization of treewidth. The local treewidth of graph $G$ is a function that maps an integer $r \in \mathbb{N}$ to the maximum treewidth of the subgraph of $G$ induced by the $r$-neighborhood $N_r(i)$ of any vertex $i \in V$, formally defined as follows.

\begin{Def}
The local treewidth of graph $G(V,E)$ is a function defined as $ltw^G(r) = \max\{tw(G[N_r(i)]): i \in V\}$
where $G[N_r(i)]$ is the subgraph of $G$ induced by $N_r(i)$.
\end{Def}

\begin{Def}
A class $\mathcal{C}$ of graphs has bounded local treewidth if there is a function $f: \mathbb{N} \to \mathbb{N}$ such that $ltw^G(r) \le f(r)$ for all $G \in \mathcal{C}$, $r \in \mathbb{N}$. $\mathcal{C}$ has linear local treewidth if there is a $\lambda \in \mathbb{R}$ such that $ltw^G(r) \le \lambda \cdot r$ for all $G \in \mathcal{C}$, $r \in \mathbb{N}$.
\end{Def}

\subsection{Graph minor}

Given a graph $G$, if graph $H$ can be reduced from a subgraph of $G$ by a sequence of edge contractions, then $H$ is a \textit{minor} of $G$, denoted by $H \le_m G$. We can see that $H \le_m G$ if and only if there is a mapping $h: V_H \to 2^{V_G}$ such that for all $x \in V_H$ the subgraph $G[h(x)]$ of $G$ induced by $h(x)$ is a connected,
$h(x) \cap h(y) = \emptyset$ for all $x \ne y \in V_H$ and, for every $(x,y) \in E_H$ there exists an edge $(u,v) \in E_G$ such that $u \in h(x)$ and $v \in h(y)$.

A class $\mathcal{C}$ of graphs is minor-closed if and only if for all $G \in \mathcal{C}$ and $H \le_m G$ we have $H \in \mathcal{C}$. We say $\mathcal{C}$ is nontrivial if $\mathcal{C}$ does not contain all the graphs. A class $\mathcal{C}$ of graphs is $H$-minor-free if $H \nleq_m G$ for all $G \in \mathcal{C}$. Then we call $H$ an excluded minor of $\mathcal{C}$. Robertson and Seymour's Graph Minor Theorem~\cite{robertson2004graph}, which solves Wagner's conjecture, demonstrates that the undirected graphs partially ordered by the graph minor relationship form a well-quasi-ordering. This implies that every minor-closed class of graphs can be characterized by a finite set of forbidden minors.

It is well known that the treewidth of a $k \times k$ grid is $k$, so planar graphs do not have bounded treewidth. 
But planar graphs have bounded local treewidth. A $k$-outerplanar graph has treewidth at most $3k-1$~\cite{bodlaender1986classes}.
It is also well known that graphs embeddable on bounded-genus surface have bounded local treewidth~\cite{eppstein2000diameter}.
The following theorem gives a precise characterization of graphs with bounded local treewidth.
The relationship of these graph classes are shown in Figure~\ref{fig:graph-class-relation}.

\begin{Thm}\label{thm:apex-minor-free}
\textbf{\cite{eppstein2000diameter}} Let $\mathcal{F}$ be a minor closed family of graphs. Then $\mathcal{F}$ has bounded local treewidth iff. $\mathcal{F}$ does not contain all apex graphs.
\end{Thm}

A graph is an apex graph if it has a vertex whose removal results in a planar graph. 
Theorem~\ref{thm:apex-minor-free} shows that a graph $G$ has bounded local treewidth iff. it is apex-minor-free.

\begin{Thm}\label{thm:linear-local-treewidth}
\textbf{\cite{demaine2004equivalence}} Any apex-minor-free graph has linear local treewidth.
\end{Thm}

\begin{figure}[h]
\vskip 0.2in
\begin{center}
\centerline{\includegraphics[width=\columnwidth]{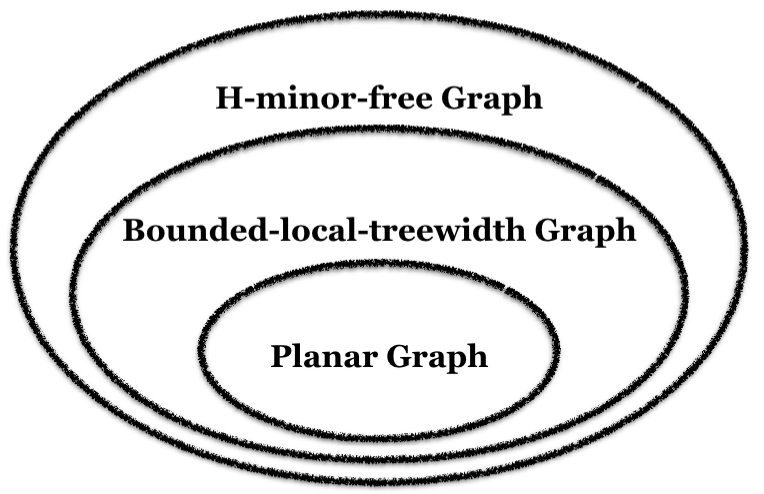}}
\caption{The class of planar graphs is contained in the class of bounded-local-treewidth graphs which is contained in the class of $H$-minor-free graphs.}
\label{fig:graph-class-relation}
\end{center}
\vskip -0.2in
\end{figure} 

\subsection{Clique sum}

The clique-sum operation is a way of combining two graphs by identifying their cliques. Suppose $G_1$ and $G_2$ are two graphs, $W_1 \subseteq V(G_1)$ and $W_2 \subseteq V(G_2)$ are two cliques of the same size. The clique sum of $G_1$ and $G_2$, denoted by $G_1 \oplus G_2$, is a graph by identifying $W_1$ and $W_2$ through a bijection $h: W_1 \to W_2$, and then possibly deleting some of the clique edges. The subgraph induced by the clique vertices in $G_1 \oplus G_2$ is called the \textit{join set}. The clique is called a $k$-sum if $|W_1|=|W_2|=k$, denoted by $G_1 \oplus_k G_2$. Since there are many possible bijections between vertices of $W_1$ and $W_2$, there are also many possible results for $G_1 \oplus G_2$.

The clique-sum operation plays an important role in the core of Robertson and Seymour's graph minor theory. The deep structural theorem~\cite{robertson2003graph} of graph minor theory states that any $H$-minor-free graph can be decomposed into a collection of graphs each of which can be embedded into a bounded-genus surface by deleting a bounded number of apex vertices where the number only depends on $V(H)$. These $h$-almost embedded graphs are combined in a tree structure by clique-sum operations. The clique-sum decomposition~\cite{demaine2005algorithmic,grohe2013asimple} is a building block by which the approximation algorithms for $H$-minor-free graphs can be achieved.

\subsection{Bounded-crossing-number graph}

We follow the definition of graphs with bounded crossings per edge in~\cite{grigoriev2007algorithms}.

\begin{Def}
An embedding of a graph $G$ on a surface $S$ of genus $g$ is a good embedding if all vertices of the graph are given as distinct points on $S$, no two edge crossings locate at the same point on $S$ and for any edge, no vertex except the endpoints of the edge locate on the edge.
\end{Def}


\begin{Def}
The \textit{crossing parameter} $\varphi$ of a graph $G$ embedded on a bounded-genus surface $S$ is the minimum over all good embeddings on $S$ of the maximum over all edges $e$ of the number of edge crossings of $e$.
\end{Def}

By the observation of~\cite{grigoriev2007algorithms}, the class of graphs with bounded crossing parameter is not minor-closed. Therefore it generalizes the discussions on $H$-minor-free graphs.

\section{Baker's Technique and Proof Sketch}
\label{section:baker}

Baker's technique, created over three decades ago, is a powerful tool for designing PTASs for NP-hard optimization problems on planar graphs. 
Its journal version is published in~\cite{baker1994approximation} that gives PTASs for many optimization problems like maximum independent set, minimum vertex cover, minimum dominating set,minimum edge dominating set, maximum triangle matching, maximum H-matching and maximum tile salvage.
To compute the maximum independent set (MIS) on a planar graph, it decomposes the planar embedding into several disjoint $k$-outerplanar graphs by removing all the vertices in layers congruent to $i~(\bmod~k+1)$ for some $0 \le i \le k$. Then the maximum independent set on each $k$-outerplanar graph can be computed by dynamic programming in $2^{O(k)}n$ time. 
The union of these maximum independent sets is a valid maximum independent set of $G$. The pigeonhole principle guarantees that there is at least one $i$ s.t. the final solution is at least $k/(k+1)$ optimal. Therefore, let $\epsilon = O(1/k)$, then maximum independent set on planar graphs can be approximated to $(1-\epsilon)$ optimal in $2^{O(k)}kn$ time.
In our model, computing the maximum independent set of $G$ is equivalent to computing $\max_{\sigma \in [0,1]^V} \sum_{i \in V} f_i$ by setting $f_i$ as
\begin{equation*}
f_i =
\begin{cases}
1,& \text{if } \sigma_i = 1, \text{ and } \sigma_j = 0 \text{ for } \forall j \text{ that } (i,j) \in E,\\
0,& \text{if } \sigma_i = 0,\\
-\infty,& \text{if } \sigma_i = 1, \text{ and } \sigma_j = 1 \text{ for } \exists j \text{ that } (i,j) \in E,
\end{cases}
\end{equation*}
where $\sigma_i = 1$ means vertex $i$ is in the independent set and $\sigma_i = 0$ otherwise. If there is an edge $(i,j) \in E$ and $\sigma_i = \sigma_j = 1$, then $f_i = f_j = -\infty$, which means the configuration is invalid.
Therefore if $\sum_{i \in V} f_i$ is maximized, we have the maximum independent set of $G$. 
Other combinatorial optimization problems like minimum vertex cover and minimum dominating set have similar encodings.

Then our task is extending Baker's technique from these specific potentials to a general version that allows $f_i$ to be any set of nonnegative functions derived from pairwise potentials.
To do this, the following lemma is a building block.

\begin{Lem}\label{lem:dynamic-programming}
Given a graph $G(V,E)$ with treewidth bounded by $k$, for any vertex set $U \subseteq V$ and any set of functions $f_i \ge 0$ derived from $\psi_i$ and $\psi_{ij}$ defined on $G$, $\max_{\sigma \in [q]^V}\sum_{i \in U} f_i$ and $\min_{\sigma \in [q]^V}\sum_{i \in U} f_i$ can be computed in $O(q^{O(k)}n)$ time.
\end{Lem}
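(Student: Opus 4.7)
The plan is to run standard bounded-treewidth dynamic programming. First I would compute a nice tree decomposition $(T,\{X_t\}_{t \in V(T)})$ of $G$ of width at most $k$, rooted at an arbitrary node, with $O(kn)$ bags of size at most $k+1$ and the usual four node types (leaf, introduce, forget, join); this can be produced in time $f(k)\cdot n$ by a linear-time algorithm for bounded treewidth, which is absorbed into the final $O(q^{O(k)}n)$ bound.

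Next I would rewrite the objective in pairwise form. Since $f_i = \varphi_i + \sum_{j \in N(i)} \alpha_{ij}\varphi_{ij}$, expanding and regrouping yields
\[
\sum_{i \in U} f_i(\sigma) \;=\; \sum_{i \in V} \phi_i(\sigma_i) \;+\; \sum_{(i,j)\in E} \psi_{ij}(\sigma_i,\sigma_j),
\]
where $\phi_i := \mathbf{1}[i\in U]\,\varphi_i$ and $\psi_{ij} := \bigl(\alpha_{ij}\mathbf{1}[i\in U] + \alpha_{ji}\mathbf{1}[j\in U]\bigr)\varphi_{ij}$. I would charge each $\phi_i$ to one fixed bag containing $i$, and each $\psi_{ij}$ to one fixed bag containing both $i$ and $j$ (such a bag exists because every edge is covered by some bag), and then run the usual bottom-up DP: for each bag $X_t$ store a table $A_t\colon [q]^{X_t} \to \mathbb{R}$ whose value on $\tau$ is the optimum (max for the first claim, min for the second) of the accumulated charges in the subtree rooted at $t$, over all extensions of $\tau$ to the already-forgotten vertices. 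The updates at leaf, introduce, forget, and join nodes are the standard ones, with introduce nodes adding the charges assigned to themselves. Each table has at most $q^{k+1}$ entries and each transition costs $q^{O(k)}$, giving total time $O(q^{O(k)}n)$ across the $O(n)$ bags; the answer is read off as $\max_\tau A_{\mathrm{root}}(\tau)$, or the corresponding minimum.

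The main obstacle will be bookkeeping: I need to verify that the charging scheme, combined with the node-type updates, computes exactly $\sum_{i \in U} f_i(\sigma)$ for each extension $\sigma$, without missing or double-counting any unary or binary term. This follows from the tree-decomposition axiom that for every vertex $v$ the bags containing $v$ form a connected subtree of $T$, which makes the canonical assignment of each $\phi_i$ and each $\psi_{ij}$ to a single bag well-defined. The hypothesis $f_i \ge 0$ is not actually used by the DP itself---max and min are handled by the same algorithm---but is what will let the outer Theorem~\ref{thm:main-sum} delete strips of vertices and still bound the loss against the optimum.
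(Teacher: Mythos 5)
Your proposal is correct and rests on the same core idea as the paper's proof: build a width-$k$ tree decomposition and run a bottom-up dynamic program whose tables are indexed by the $q^{O(k)}$ configurations of each bag. The difference is in how the objective is distributed over the bags. The paper keeps the aggregated vertex functions $f_i$ intact and defines its DP state $S^U_{i\setminus p_i}$ to exclude the contribution of vertices in $X_{p_i}\cup\partial X_{p_i}$, i.e.\ it must track which $f_i$ are ``fully determined'' yet, since $f_i$ depends on the whole neighborhood of $i$; this forces the $\partial X_{p_i}$ bookkeeping and the explicit caveat in the paper that the recursion is only valid when the $f_i$ arise from pairwise terms as in Equation~(\ref{eq:energy}). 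You instead re-expand $\sum_{i\in U} f_i$ back into unary terms $\phi_i$ and binary terms $\psi_{ij}$ and charge each term to a single canonical bag, which is the textbook formulation and avoids the boundary-set machinery entirely; your expansion $\psi_{ij}=(\alpha_{ij}\mathbf{1}[i\in U]+\alpha_{ji}\mathbf{1}[j\in U])\varphi_{ij}$ is the correct accounting, and the well-definedness of the charging follows, as you say, from the connected-subtree axiom (the bags containing both endpoints of an edge form a nonempty connected subtree, so a topmost one exists). Note that your restriction to $f_i$ of the pairwise form is not a loss of generality relative to the paper: the paper's proof explicitly admits the same restriction, even though the lemma as stated reads as if it covered arbitrary neighborhood functions. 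Your observation that $f_i\ge 0$ plays no role in the DP itself and is only needed for the outer shifting argument is also accurate and matches the paper, which handles min-sum by the identical recursion.
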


The algorithm is a dynamic programming algorithm running on the tree decomposition of $G(V,E)$, denoted by $\mathcal{DP}_{G,\mathcal{F}}(U)$ where $\mathcal{F} = \{f_i: i \in V\}$ is the set of potentials attached on $V$.

We need to point out that our contribution is not inventing the dynamic programming algorithm of Lemma~\ref{lem:dynamic-programming} since there is no intrinsic difference between standard treewidth techniques such as junction tree algorithm.
Our contribution is changing the optimization goal from maximizing the sum of all potentials to maximizing the sum of potentials attached on a subset of vertices, which is significant for us to extend Baker's technique. 
Since the procedure of computation is different anyway, we elaborate the proof of Lemma~\ref{lem:dynamic-programming} as follows to clarify the details.

\begin{proof}[Proof of Lemma~\ref{lem:dynamic-programming}]
Since the treewidth of $G$ is bounded by $k$, we use the algorithm of~\cite{bodlaender1993linear} to construct a tree decomposition $T = (I,F)$ rooted at $r \in I$ with treewidth $k$ for $G$ in linear time (but exponential in $k$).
For each $i \in I$, the subtree of $T$ rooted at $i$ is denoted by $T_i$.
The set of vertices in $T_i$ is denoted by $V_{T_i}$.
The configurations of bag $X_i$ where $i \in I$ is denoted by $\sigma_{X_i}$.
Suppose the child nodes of $i \in I$ are $i_1, \ldots, i_d \in I$ and the parent node of $i \in I$ is $p_i \in I$.

The dynamic programming runs from the leaves to the root. We enumerate the all the possible configurations of $X_i - X_{p_i}$ for each bag $X_i$. For root $r$, $X_r - X_{p_r} = X_r$. By the definition of tree decomposition, $X_{i_x} \cap X_{i_y} \subseteq X_i$ for $1 \le x \ne y \le d$. Therefore, $X_{i_t} - X_{i}$ for $1 \le t \le d$ are pairwise disjoint.
Let $S_{i \backslash p_i}^U(\sigma_{i \backslash p_i})$ denote the max-sum of the $f_i$ attached on vertices in $(U \cap V_{T_i}) - (X_{p_i}\cup\partial X_{p_i})$ with the configurations of vertices in $X_i - X_{p_i}$ being fixed to $\sigma_{i \backslash p_i}$.
The set $\partial X_{p_i}$ denotes the vertices adjacent to vertices in $X_{p_i}$ but not in $X_{p_i}$.
Note that $S_{i \backslash p_i}^U(\sigma_{i \backslash p_i})$ does not include the sum of $f_i$ attached on vertices in $\partial X_{p_i}$. This is because their values are not fixed when the configurations of $X_{p_i}$ are not given.
The value of $S_{i \backslash p_i}^U(\sigma_{i \backslash p_i})$ can be computed by the following recurrence:
\begin{equation*}
S_{i \backslash p_i}^U(\sigma_{i \backslash p_i}) =
\max_{\{\sigma_{i_t \backslash i: 1 \le t \le d}\}}\bigg\{ \Gamma_{ X_i - X_{p_i}}^{\sigma_{i \backslash p_i}} + \sum_{t=1}^d S_{i_t \backslash i}^U(\sigma_{i_t \backslash i}) \bigg\}
\end{equation*}
where $\Gamma_{ X_i - X_{p_i}}^{\sigma_{i \backslash p_i}}$ is the sum of $f_i$ attached on vertices in $(U \cap (X_i\cup\partial X_i)) - (X_{p_i}\cup\partial X_{p_i})$ when the configuration of $X_i - X_{p_i}$ is fixed to $\sigma_{i \backslash p_i}$.
Then
$\max_{\sigma \in [q]^V}\sum_{i \in U} f_i = \max_{\sigma_{r \backslash p_r}} S_{r \backslash p_r}^U(\sigma_{r \backslash p_r})$.
Note that this recursion only holds for the set of $f_i$ derived from the energy functions, which takes the form of $f_i = \psi_i + \sum_{j \in N(i)} \alpha_{ij} \cdot \psi_{ij}$.
Since $\psi_{ij}$ is pairwise so that we can enumerate all the possibilities of each $\sigma_{i_t \backslash i}$ for each $1 \le t \le d$ respectively.
The min-sum can be computed in the same way.
For each bag $X_i$, since $|X_i| \le k+1$ for all $i \in I$, $\sigma_{i \backslash p_i}$ has at most $q^{k+1}$ possible values. We compute $S_{i \backslash p_i}^U(\sigma_{i \backslash p_i})$ for each $\sigma_{i \backslash p_i}$ at most once.
To compute $S_{i \backslash p_i}^U(\sigma_{i \backslash p_i})$, we need to know $X_i \cap X_{p_i}$ and $X_i \cap \partial X_{p_i}$ for each $X_i$. 
Given the tree decomposition, $X_i \cap X_{p_i}$ can be preprocessed in $O(kn)$ time if the vertices in each $X_i$ are stored in order or using data structures such as hash tables. Given $X_i \cap X_{p_i}$, $X_i \cap \partial X_{p_i}$ can be computed in at most $O(k^2 n)$ time.
Therefore, the total time complexity of our dynamic programming algorithm is $O(q^{O(k)}n)$.
\end{proof}

Similar to the original Baker's technique, our approximation algorithm follows a divide-and-conquer style.
The vertices in the input graph $G$ are labeled by numbers from $0$ to $k+1$. 
The vertices labeled by the same number belong to the same level.
The vertices of level $i$ are only adjacent to vertices of level $i-1~(\bmod~k+2)$ and vertices of level $i+1~(\bmod~k+2)$.
Then we delete 0 or several levels of vertices to decompose the input graph into several disjoint subgraphs and use the dynamic programming algorithm of Lemma~\ref{lem:dynamic-programming} to compute a partial solution on each subgraphs. 
Finally, we combine these partial solutions to obtain a approximation of the optimal solution.
For different graph classes, the ways of vertex labeling are different. Different ways of vertex labelling correspond to different graph decomposition techniques, which will be specified in Section~\ref{section:decomposition} for all graph classes mentioned in Theorem~\ref{thm:main-energy}.

More specifically, we decompose the input graph by deleting all the edges and vertices (if exist) with labels $\ell$ that $i < \ell < i+\Delta~(\bmod~k+2)$ $(0 \le i \le k+1)$ where $\Delta \ge 1$ is a constant depending on the input graph.
It satisfies that after the deletion $G$ is decomposed into several subgraphs whose treewidths are bounded by $O(k)$. 
The subgraphs are denoted by $G_1^i, \ldots, G_t^i$. The vertices adjacent to deleted edges in subgraph $G_j^i(V_j^i, E_j^i)$ $(0 \le j \le t)$ are called \textit{boundary nodes}, denoted by $B_j^i$. The non-boundary nodes are denoted by $A_j^i = V_j^i - B_j^i$. 
Then we use $\mathcal{DP}_{G_j^i,\mathcal{F}_j^i}(A_j^i)$ to maximize the sum of potentials attached on vertices in $A_j^i$ while ignoring the values of the potentials attached on vertices in $B_j^i$. 
Actually, after the edge deletion, the outputs of the potentials attached on vertices in $B_j^i$ are undefined since they cannot read all the inputs. 
The configuration of vertices in $B_j^i$ is only required for calculating the values of potentials attached on vertices in $\partial B_j^i \cap A_j^i$ where $\partial B_j^i $ denotes the vertices adjacent to $B_j^i$.
When the sum of the potentials attached on $A_j^i$ has been calculated by $\mathcal{DP}_{G_j^i,\mathcal{F}_j^i}(A_j^i)$, the configuration of $B_j^i$ is also fixed.
 
Suppose $A^i = \bigcup_{j=1}^t A_j^i$ and $B^i = \bigcup_{j=1}^t B_j^i$.
Let $S_{A^i} = \sum_{j=1}^t S_{A_j^i}$ where $S_{A_j^i}$ is the sum of the potentials attached of vertices in $A_j^i$ calculated by $\mathcal{DP}_{G_j^i,\mathcal{F}_j^i}(A_j^i)$. 
Similarly, let $S_{B^i} = \sum_{j=1}^t S_{B_j^i}$.
Suppose $S_{\text{OPT}}$ is the optimum of $E(\sigma)$.
By the pigeonhole principle, for at least one $i$, at most $\frac{\Delta+1}{k+2}$ of $S_{\text{OPT}}$ is produced by potentials attached on vertices on $V - A^i$. 
Therefore, it holds that
$S_{A^i} = \sum_{j=1}^t S_{A_j^i} \ge \left(1-\frac{\Delta+1}{k+2}\right)\cdot S_{\text{OPT}}$.
Since $f_i \ge 0$ for all $i \in V$, thus we have
$S = S_{A^i} + S_{B^i} + S_{\Delta} \ge S_{A^i} \ge \left(1-\frac{\Delta+1}{k+2}\right)\cdot S_{\text{OPT}}$
where $S$ is the solution computed by our approximation algorithm and $S_{\Delta}$ is the sum of potentials attached on the vertices in $V - A^i - B^i$.
Given a fixed error $0 < \epsilon < 1$, it needs to satisfy that $\frac{\Delta+1}{k+2} \le \epsilon$, which implies $k \ge \left\lceil \frac{\Delta+1}{\epsilon}-2 \right\rceil = O(\frac{1}{\epsilon})$. 

As the running time of $\mathcal{DP}_{G_j^i,\mathcal{F}_j^i}(A_j^i)$ is $O(q^{O(k)}n)$ and  the dynamic programming for different $i$ can be computed in parallel, the time complexity for a fixed $i$ is $O(q^{O(k)}n)$. 
For each $0 \le i \le k+1$, we need to repeat the dynamic programming. The total time complexity is $O(q^{O(k)}kn)$.
This completes the proof sketch of Theorem~\ref{thm:main-energy}.

\section{Graph Decompositions}
\label{section:decomposition}

\subsection{Planar Graphs}

For planar graphs, $\Delta = 1$. Given a planar embedding of a planar graph $G$, we decompose it into several disjoint $(k+2)$-outerplanar subgraphs $G_1^i, \ldots, G_t^i$ by deleting all the edges between levels congruent to $i~(\bmod~k+2)$ and $i+1~(\bmod~k+2)$ for some integer $i$ that $0 \le i \le k+1$. 
As we have proved in Section~\ref{section:baker}, the result is $\left(1-\frac{\Delta+1}{k+2}\right) = \frac{k}{k+2}$ optimal.

\subsection{Bounded-local-treewidth Graphs}

For bounded-local-treewidth graphs, $\Delta = 1$.
Choosing any vertex $v \in V$ as root, construct a BFS tree $T$ rooted at $v$. The layer of vertices is defined as its distance to $v$. Moreover, the set of vertices from layer $i$ to layer $j$ is denoted by 
$L_v^G[i,j] = \{u \in V | i \le \text{dist}(u,v) \le j\}$.
If $i > j$, $L_v^G[i,j] = \emptyset$.
For any $i \le j$, $L_v^G[i,j]$ has bounded local treewidth. This is because if we obtain a minor $H$ of $G$ by contracting the subgraph of $G$ induced by $L_v^G[0,i-1]$ to a single vertex $v'$, $L_v^G[i,j] \subseteq L_{v'}^H[1,j-i+1]$. Since $G$ is apex-minor-free, $H$ is also apex-minor-free. Therefore, $H$ has bounded local treewidth. Then we have $tw(L_v^G[i,j]) = O(j-i+1)$. It implies any subgraph induced by consecutive $k$ levels of vertices in $T$ has treewidth bounded by $O(k)$.
We delete all the edges between levels congruent to $i~(\bmod~k+2)$ and $i+1~(\bmod~k+2)$ for some integer $i$ that $0 \le i \le k+1$. Then $G$ is decomposed into several disjoint subgraphs $G_1^i,\ldots,G_t^i$. 
Hence the result is $\left(1-\frac{\Delta+1}{k+2}\right) = \frac{k}{k+2}$ optimal.

\subsection{$H$-minor-free Graphs}

For $H$-minor-free graphs, $\Delta = 1$.
A graph $H$ is a $k$-apex graph of a graph $G$ if $G = H \backslash A$ for some subset $A$ of at most $k$ vertices which is called \textit{apices}.
The definition of almost-embeddable graph is given as follows.

\begin{Def}
A graph $G$ is \textit{almost-embeddable} on a surface $\Sigma$ if $G$ can be written as the union of $k+1$ graphs $G_0 \cup G_1 \cup \ldots \cup G_k$, satisfying the following conditions:
(1) $G_0$ has an embedding on $\Sigma$.
(2) The graphs $G_1,G_2,\ldots,G_k$ are pairwise disjoint, called vortices.
(3) For each index $i \ge 1$, there is a disk $D_i$ inside some face $F_i$ of $G_0$, such that $U_i = V(G_0) \cap V(G_i) = V(G_0) \cap D_i$. Moreover, the disks $D_i$ are pairwise disjoint.
(4) For each index $i \ge 1$, the subgraph $G_i$ has pathwidth less than $k$. Moreover, $G_i$ has a path decomposition $\langle X_i^1,X_i^2,\ldots,X_i^{r_i} \rangle$ with $r_i \le k$, such that $v_i^j \in X_i^j$ for $1 \le j \le r_i$, where $v_i^1,v_i^2,\ldots,v_i^{r_i}$ are the vertices of $U_i$ indexed in cyclic order around the face $F_i$, clockwise or anti-clockwise.
\end{Def}


\begin{Lem}\label{lem:almost-embeddable}
\textbf{\cite{grohe2003local}} The class of all graphs almost embeddable in a fixed surface $S$ has linear local treewidth.
\end{Lem}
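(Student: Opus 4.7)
The plan is to reduce the almost-embeddable case to the bounded-genus case by contracting each vortex to a single vertex, invoking the classical fact (due to Eppstein) that graphs $2$-cell embedded in a fixed surface $S$ have linear local treewidth, and then uncontracting while paying only a constant additive price in the treewidth.

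First I would bound the vortex sizes. By condition~(4) each vortex $G_i$ has a path decomposition $\langle X_i^1,\ldots,X_i^{r_i}\rangle$ with $r_i\le k$ bags of pathwidth less than $k$, hence $|V(G_i)|\le k^2$. Since there are at most $k$ vortices, the total number of vortex vertices is $M := k^3$, a constant depending only on $k$.

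Next let $\widetilde{G}$ be the graph obtained from $G$ by contracting each $V(G_i)$ to a single vertex $x_i$. Since $U_i\subseteq D_i$ and $D_i$ is a disk inside the face $F_i$ of $G_0$, this graph contraction can be realized topologically as identifying each disk $D_i$ (together with the vortex pushed into it) to a point, which preserves the homeomorphism type of $\Sigma$; hence $\widetilde{G}$ embeds in a surface of genus at most that of $\Sigma$. Applying the bounded-genus linear local treewidth result to $\widetilde{G}$: for any $v \in V(G)$ with image $\widetilde{v}$ in $\widetilde{G}$ and any radius $r$, the $r$-ball of $\widetilde{G}$ around $\widetilde{v}$ induces a subgraph of treewidth $O(r)$, with a witnessing tree decomposition $\widetilde{T}$.

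Finally I lift $\widetilde{T}$ to a tree decomposition $T$ of the induced subgraph on the $r$-ball $B$ of $G$ around $v$. Since graph contractions do not increase distances, the image of $B$ in $\widetilde{G}$ is contained in the $r$-ball around $\widetilde{v}$, so $\widetilde{T}$ already covers (the images of) all $V(G_0) \cap B$ vertices. I obtain $T$ from $\widetilde{T}$ by replacing every occurrence of a contracted vertex $x_i$ in a bag with the set $V(G_i) \cap B$. Validity follows because edges in $G_0$ are covered by $\widetilde{T}$; edges interior to $G_i$ have both endpoints in $V(G_i)$, which now lies in every bag that previously contained $x_i$; and vortices are pairwise disjoint by condition~(2), so there are no cross-vortex edges. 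Even if all $k$ contracted vertices co-occur in a single bag, replacing them inflates the width by at most $M$, so the width of $T$ is $O(r)+M=O(r)$, which establishes linear local treewidth.

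The step I expect to be the main obstacle is rigorously justifying that $\widetilde{G}$ still embeds in a surface of bounded genus after vortex contractions. This requires a careful topological argument showing that identifying each disk $D_i$ to a point yields a continuous quotient whose quotient space is homeomorphic to $\Sigma$, so that $\widetilde{G}$ inherits a proper embedding. Once this topological step is in place, the remaining bookkeeping (verifying the tree-decomposition axioms and the width computation) is routine.
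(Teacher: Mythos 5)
The paper does not actually prove this lemma --- it is quoted from Grohe (2003) as a black box --- so the only question is whether your argument is a valid self-contained proof. It is not, and the failure point is exactly the step you dismiss as bookkeeping rather than the topological step you flag as the obstacle. Your entire argument rests on the bound $|V(G_i)|\le k^2$, hence a \emph{constant} total vortex size $M=k^3$, which you extract from the clause ``$r_i\le k$'' in the paper's Definition of almost-embeddability. That clause is a misstatement of the standard definition: in Robertson--Seymour and in Grohe's paper the path decomposition of a vortex is indexed by the vertices of $U_i$ in cyclic order around the face, so $r_i=|U_i|$, which is \emph{not} bounded by $k$ and can be $\Theta(n)$. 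Only the width of each bag is bounded by $k$; a vortex may contain $\Theta(nk)$ vertices. (This is also the definition the paper actually needs: Theorem~\ref{thm:rs2003} produces vortices of unbounded size, and if vortices really had $O(k^2)$ vertices the lemma would be a triviality via $tw(G)\le tw(G\setminus X)+|X|$.) Under the correct definition, $M$ is not a constant, and your final step --- replacing each contracted vertex $x_i$ in every bag of $\widetilde{T}$ by $V(G_i)\cap B$ --- inflates the width by up to $\Theta(n)$, destroying the $O(r)$ bound.

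The parts of your argument that do survive are the contraction itself (identifying the vertices of $U_i$, which lie on a common face, to a point and deleting vortex-interior vertices keeps the graph embeddable in $S$; this is standard and not the hard part) and the observation that contractions do not increase distances, so the $r$-ball of $G$ maps into the $r$-ball of $\widetilde{G}$. What is missing is the genuinely delicate content of Grohe's proof: one must exploit the fact that the vortex path decomposition is \emph{aligned} with the cyclic order of $U_i$ around the face, so that the bags of the vortex decomposition can be interleaved with (inserted into) the bags of the $O(gr)$-width decomposition of the embedded part, each insertion costing only an additive $O(k)$ per bag rather than an additive $|V(G_i)|$ globally. Without an argument of that type, the lemma is not established for vortices of unbounded size.
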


\begin{Def}
A graph $G$ is \textit{$h$-almost-embeddable} on a surface $\Sigma$ if $H$ is a $h$-apex graph of a graph that is almost embeddable on $\Sigma$.
\end{Def}

\begin{Thm}\label{thm:rs2003}
\textbf{\cite{robertson2003graph}} For any graph $H$, there is an integer $h \ge 0$ depending only on $|V(H)|$ such that any $H$-minor-free graph is a $h$-clique sum of a finite number of graphs that are $h$-almost-embeddable on some surfaces on which $H$ cannot be embedded.
\end{Thm}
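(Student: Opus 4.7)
This is the deep structural theorem by Robertson and Seymour, proved in the sixteenth paper of their Graph Minors series. I would not attempt a self-contained proof here since the argument spans hundreds of pages across many papers; instead I sketch the high-level strategy one would follow to prove it.

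The plan is to proceed by strong induction on $|V(H)|$ combined with a dichotomy on the treewidth of $G$. The fundamental tool is the Excluded Grid Theorem: either $G$ has treewidth bounded by a function of $|V(H)|$, in which case the trivial tree decomposition consisting of a single bag suffices, or $G$ contains an enormous grid minor. In the second case, one invokes the Flat Wall Theorem (equivalently, the Weak Structure Theorem): inside a sufficiently large grid minor, one can locate a \emph{flat wall}, i.e., a large wall sitting in a disk-like region of $G$ after removing at most $h$ \emph{apex} vertices, unless $G$ already contains a clique minor large enough to force an $H$-minor and contradict the hypothesis.

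Given such a flat wall, the next step is to show that the disk-like region extends to a maximal piece $G_0$ that embeds in some surface $\Sigma$ whose genus can be bounded in terms of $|V(H)|$ (since $H$ must not embed in $\Sigma$ by the choice of $H$-minor-freeness). The boundary between $G_0$ and the rest of $G$ is crossed by families of vertex-disjoint paths which are bundled into \emph{vortices} attached at faces of the embedding; the core technical work is to show that each vortex admits a path decomposition of width at most $h$, which is done via the Two-Paths Theorem and careful rerouting arguments inside the surface. One then separates $G_0$ together with its apices and vortices from the remainder of $G$ along a small clique separator of size at most $h$, invokes the induction hypothesis on the residual pieces, and glues the local decompositions into a global tree decomposition whose bags are $h$-almost-embeddable.

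The main obstacle by far is the flat-wall extraction together with the pathwidth bound on vortices: these require substantial two-dimensional topology, rerouting lemmas for \emph{societies} (cyclic arrangements of boundary vertices), and a precise control of how paths from outside the disk can interleave. For the purposes of this paper I would treat Theorem~\ref{thm:rs2003} as a black box and only invoke it via the polynomial-time algorithm of Demaine--Hajiaghayi--Kawarabayashi or the quadratic-time algorithm of Grohe and coauthors that produces such a clique-sum decomposition explicitly; the subsequent arguments then combine this decomposition with Lemma~\ref{lem:almost-embeddable} and the techniques from Theorem~\ref{thm:apex-max-sum} to reduce the PTAS for $H$-minor-free graphs to the bounded local treewidth case, separating apices by enumeration (as already done for MAX $k$-CUT) and partitioning along BFS-levels within each almost-embeddable piece.
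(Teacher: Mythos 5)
The paper itself gives no proof of this statement—it is quoted verbatim as Robertson and Seymour's structural theorem from Graph Minors XVI and used purely as a black box via the constructive decomposition algorithms of Demaine--Hajiaghayi--Kawarabayashi and Grohe et al., which is exactly how you propose to treat it. Your high-level sketch of the actual Robertson--Seymour argument is a reasonable summary, and your stated plan for invoking the theorem matches the paper's usage, so there is nothing to correct here.
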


Theorem~\ref{thm:rs2003} says that any $H$-minor-free graph $G$ can be expressed as a ``tree structure" of pieces, where each piece can be embedded on a surface on which $H$ cannot be embedded after deleting at most $h$ apex vertices.

\begin{Thm}\label{devos-clique-sum}
\textbf{\cite{devos2004excluding}} For the clique-sum decomposition of a $H$-minor-free graphs, written as $G_1 \oplus \ldots \oplus G_t$, the join set of each clique-sum operation between $G_1 \oplus \ldots \oplus G_{i-1}$ and $G_i$ is a subset of the apices of $G_i$. Moreover, each join set of the clique-sum decomposition involving $G_j$ contains at most three vertices of the bounded-genus part of $G_j$.
\end{Thm}

The following theorem gives a polynomial-time algorithm for computing the clique-sum decomposition with the additional properties guaranteed by Theorem~\ref{devos-clique-sum}.

\begin{Thm}\label{demaine-clique-sum}
\textbf{\cite{demaine2005algorithmic}} For a fixed graph $H$, there is a constant $c_H$ such that, for any integer $k \ge 1$ and for every $H$-minor-free graph $G$, the vertices of $G$ (or the edges of $G$) can be partitioned into $k + 1$ sets such that any $k$ of the sets induce a graph of treewidth at most $c_H k$. Furthermore, such a partition can be found in polynomial time.
\end{Thm}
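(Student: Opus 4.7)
The statement we are asked to justify is a structural/algorithmic decomposition theorem for $H$-minor-free graphs, so my plan is to combine the Robertson--Seymour structure theorem (Theorem~\ref{thm:rs2003}) with Baker-style layering inside each almost-embeddable piece, and then argue that removing any one color class leaves a graph of treewidth $O(c_H k)$.

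\textbf{Step 1: Obtain a clique-sum decomposition.} I would first invoke Theorem~\ref{thm:rs2003} to write the input $H$-minor-free graph $G$ as an $h$-clique-sum of pieces $G_1,\ldots,G_t$ that are $h$-almost-embeddable on surfaces of bounded genus, with $h$ depending only on $H$. For the algorithmic part, I would cite the polynomial-time construction of such a decomposition (from the same paper, or the later quadratic-time version of Grohe), so that all subsequent work reduces to a per-piece problem.

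\textbf{Step 2: Layer each almost-embeddable piece.} For a fixed piece $G_j$, let $A_j$ be its (at most $h$) apex vertices and let $G_j' = G_j \setminus A_j$, which is almost-embeddable in the sense of the paper. Root a BFS tree in each connected component of $G_j'$ at an arbitrary vertex, let $L_0^j, L_1^j, L_2^j, \ldots$ be the resulting BFS layers, and assign each vertex $v \in L_i^j$ to color class $C_{i \bmod (k+1)}$. The apex vertices of $A_j$, and the at most three ``attachment'' vertices shared with other pieces through the clique-sum, will be treated as always belonging to every class that currently needs them — equivalently, they are duplicated across classes when we later form the union. This yields the promised partition of $V(G)$ (or of $E(G)$, by coloring each edge by the class of one of its endpoints) into $k+1$ sets.

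\textbf{Step 3: Bound the treewidth after deleting one class.} Fix any $\ell$ and let $G - C_\ell$ be the graph obtained by removing color class $C_\ell$. In each piece $G_j'$, what remains is a union of ``slabs'' of $k$ consecutive BFS layers, each of which has radius at most $k$ from some root. By Lemma~\ref{lem:almost-embeddable} (linear local treewidth of almost-embeddable graphs), each such slab has treewidth at most $\lambda k$ for a constant $\lambda$ depending only on $H$. Adding back the $\le h$ apex vertices of each piece increases treewidth by at most $h$, giving $O(k)$ per piece. Finally, the clique-sum operations identify at most $h$ vertices between adjacent pieces, so taking the tree of pieces (from the clique-sum structure) as a scaffold and concatenating the per-piece tree decompositions — with each shared clique placed into the relevant bags — yields a global tree decomposition of $G - C_\ell$ of width at most $c_H k$ for an absolute constant $c_H$ depending only on $H$.

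\textbf{Main obstacle.} The delicate point is Step~3's gluing: we must ensure that when we chain the per-piece tree decompositions along the clique-sum tree, the shared (``join'') vertices appear in a common bag for every pair of adjacent pieces, and that apex vertices are simultaneously handled correctly across the many slabs produced by layering. Because each join set is of bounded size and each piece has at most $h$ apices, this only inflates bag sizes by a constant depending on $H$, but writing it down cleanly — in particular verifying that BFS roots can be chosen consistently and that the resulting width bound is truly linear in $k$ rather than in the number of pieces — is the technical core of the argument, and is exactly where the Demaine--Hajiaghayi proof in \cite{demaine2005algorithmic} does its real work.
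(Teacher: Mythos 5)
The paper does not actually prove this statement: it is quoted from \cite{demaine2005algorithmic} and used as a black box, so there is no in-paper proof to compare against. The closest internal analogue is the paper's proof of Theorem~\ref{thm:h-minor-max-sum}, which runs exactly the decomposition-plus-layering argument you sketch, and comparing against it exposes two concrete gaps in your outline. First, in Step~2 you declare that the apex vertices and the clique-sum join vertices are ``treated as always belonging to every class'' and ``duplicated across classes.'' That does not yield a partition of $V(G)$ into $k+1$ sets, which is what the theorem asserts; once classes overlap, the phrase ``any $k$ of the sets induce a graph of small treewidth'' loses its meaning, since the deleted class no longer removes those vertices. The correct move, and the one the paper uses, exploits the ordering of the clique-sum: the join set $J_{i-1}$ of the $i$-th sum is contained in the apex set $X_i$ of the newly attached piece, so those vertices already carry labels assigned in earlier pieces; the remaining vertices of $X_i$ are labeled arbitrarily, and their effect on treewidth is absorbed afterwards via $tw(G) \le tw(G \backslash X) + |X|$ (or Lemma~\ref{lem:truly-sublinear-treewidth}), which costs only an additive $O(h)$ \emph{per piece} because the clique-sum inequality $tw(G_1 \oplus G_2) \le \max\{tw(G_1), tw(G_2)\}$ keeps the bound local to a piece rather than summing over the (unboundedly many) pieces.

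Second, in Step~3 the claim that a slab of $k$ consecutive BFS layers ``has radius at most $k$ from some root'' is false as stated: the layers $L_i,\ldots,L_{i+k-1}$ need not lie in any ball of radius $k$, so Lemma~\ref{lem:almost-embeddable} does not apply to the slab directly. The standard repair (carried out in the paper's proof of Theorem~\ref{thm:apex-max-sum}) is to contract $L_0 \cup \cdots \cup L_{i-1}$ to a single vertex $v'$; in the resulting minor the slab sits inside $N_k(v')$, and one then invokes linear local treewidth of the (minor-closed hull of the) almost-embeddable class. With these two repairs your outline does match the Demaine--Hajiaghayi argument, and the gluing difficulty you flag at the end is in fact the easy part: it is dispatched by the clique-sum treewidth inequality together with the fact that every join set has at most $h$ vertices.
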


Grohe et al.~\cite{grohe2013asimple} give a quadratic time algorithm that is faster for computing the clique-sum decomposition of $H$-minor-free graphs. When we describe our approximation algorithm, we always assume that such a clique-sum decomposition has already been given.

\begin{Def}
Graph class $\mathcal{G}$ has truly sublinear treewidth with parameter $\lambda$ where $0 < \lambda < 1$, if for every $\eta > 0$, there exists $\beta > 0$ such that for any graph $G \in \mathcal{G}$ and $X \subseteq V(G)$ the condition $tw(G \backslash X) \le \eta$ yields that $tw(G) \le \eta + \beta|X|^\lambda$.
\end{Def}

\begin{Lem}\label{lem:truly-sublinear-treewidth}
\textbf{\cite{fomin2011bidimensionality}} Let $\mathcal{G}_H$ be a class of graphs excluding a fixed graph $H$ as a minor, then $\mathcal{G}_H$ has truly sublinear treewidth with $\lambda = \frac{1}{2}$.
\end{Lem}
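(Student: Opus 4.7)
The plan is to combine the linear excluded grid theorem for $H$-minor-free graphs with a pigeonhole argument showing that a grid minor is robust under the deletion of a few vertices. Fix $\eta>0$, let $G\in\mathcal{G}_H$, let $X\subseteq V(G)$ with $tw(G\setminus X)\le\eta$, write $s=|X|$, and let $r(G)$ denote the largest $r$ such that the $r\times r$ grid is a minor of $G$. The linear excluded grid theorem for $H$-minor-free graphs (Demaine--Hajiaghayi) supplies a constant $c_H$ depending only on $H$ with $tw(G')\le c_H(r(G')+1)$ for every $G'\in\mathcal{G}_H$, so the task reduces to bounding $r(G)$ by $O_\eta(\sqrt{s})$.

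For the robustness step I would fix an $r(G)\times r(G)$ grid minor of $G$ and regard its branch sets as cells of an $r(G)\times r(G)$ combinatorial grid. Partition this combinatorial grid into an $m\times m$ array of contiguous rectangular sub-grids with $m=\lceil\sqrt{s}\rceil+1$, so that each sub-grid has side at least $\lfloor r(G)/m\rfloor$ and there are $m^2>s$ sub-grids in total. Each vertex of $X$ lies in at most one branch set and therefore spoils at most one sub-grid, so by pigeonhole at least one sub-grid is untouched, and its branch sets still witness a grid minor of $G\setminus X$ of side $\lfloor r(G)/m\rfloor$. Since the $k\times k$ grid has treewidth $k$ and treewidth is minor-monotone, I get $\eta\ge tw(G\setminus X)\ge r(G\setminus X)\ge\lfloor r(G)/m\rfloor$, hence $r(G)=O(\eta\sqrt{s})$ for $s\ge 1$. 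Feeding this back into $tw(G)\le c_H(r(G)+1)$ yields $tw(G)\le\eta+\beta\sqrt{s}$ for a suitable $\beta=\beta(\eta,H)$, and the case $s=0$ is immediate.

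The main obstacle is really the linear excluded grid theorem itself: establishing a linear bound between treewidth and grid-minor number for $H$-minor-free graphs relies on the Robertson--Seymour structure theorem and is substantial, so I would invoke it as a black box. Once that is available, the rest of the argument is essentially a clean pigeonhole partition of a grid minor into sub-grids; the only care needed is in choosing $m$ large enough that $m^2>s$ yet small enough that $\lfloor r(G)/m\rfloor$ still dominates $\eta$, which the choice $m=\lceil\sqrt{s}\rceil+1$ arranges.
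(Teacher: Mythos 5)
The paper does not prove this lemma at all --- it is imported verbatim from the cited reference of Fomin, Lokshtanov, Raman and Saurabh --- so there is no in-paper argument to compare against. Your reconstruction is correct and is essentially the proof given in that source: the linear excluded-grid theorem for $H$-minor-free graphs as the (genuinely deep) black box, followed by the pigeonhole observation that $|X|$ vertices can spoil at most $|X|$ of the $m^2>|X|$ disjoint sub-grids, leaving an untouched $\lfloor r/m\rfloor$-grid minor in $G\setminus X$ whose treewidth is bounded by $\eta$; the only boundary cases ($r<m$ and $|X|=0$) are trivial, and the definition's quantifier order permits $\beta$ to depend on $\eta$ as you use.
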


Our algorithm leverages the graph decomposition technique in~\cite{demaine2005algorithmic}.
Suppose the clique-sum decomposition of the input $H$-minor-free graph $G$ is $G_1 \oplus \ldots \oplus G_t$ where each $G_i$ ($1 \le i \le t$) is an $h$-almost embeddable graph. The join set $J_i$ of the $i$-th clique-sum operation $(G_1 \oplus \ldots \oplus G_i) \oplus G_{i+1}$ is a subset of the apex set $X_{i+1}$ of $G_{i+1}$. Our approximation algorithm takes the clique-sum decomposition as input, the apex set $X_i$ of each $G_i$ is given as part of the input clique-sum decomposition.
By the definition of the $h$-almost embeddable graphs, $G_i - X_i$ is almost embeddable on a bounded-genus surface where $X_i$ contains at most $h$ vertices. By lemma~\ref{lem:almost-embeddable}, $G_i - X_i$ has bounded local treewidth.
From $i = 1$ to $t$, we choose a vertex $v_i \in G_i - X_i$ and construct a BFS tree $T_i$ of $G_i - X_i$ rooted at $v_i$. Each vertex in $u \in G_i - X_i$ is labeled by the distance between $u$ and $v_i$ modulo $k+2$. After this step, we delete all the edges between levels labeled by $\ell~(\bmod~k+2)$ and the adjacent levels labeled by $\ell+1~(\bmod~k+2)$. Then the $G_i - X_i$ part is decomposed into several disjoint subgraphs with treewidth at most $c(k+2)$ for some constant $c > 0$.
Since $J_{i-1} \subseteq X_{i}$, the vertices in $J_{i-1}$ has already been labeled in $G_1 \oplus G_2 \oplus \ldots \oplus G_{i-1}$.
We label the vertices in $X_i \backslash J_{i-1}$ arbitrarily by the integers from 0 to $k+1$.
After the edge deletions, the obtained graphs $G_i'$ are still $H$-minor-free for $1 \le i \le t$.
By~\cite{fomin2011bidimensionality}, the treewidth of $G_i'$ is at most $ck + \beta |X_i|^{1/2} \le ck + \beta h^{1/2} = O(k)$.
It is known in~\cite{demaine2004approximation} that $tw(G \oplus H) \le \max\{tw(G),tw(H)\}$, thus we have
$tw(G_1' \oplus G_2' \oplus \ldots \oplus G_t') \le \max\{tw(G_1'), \ldots, tw(G_t')\}$.
This shows that any $H$-minor-free graph can be transformed into a graph with treewidth bounded by $O(k)$ by deleting at most $|E|/k$ edges. 
Given a clique-sum decomposition, the vertex labeling and edge deletions can be finished in linear time.
By a similar argument, we achieve the $\left(1-\frac{\Delta+1}{k+2}\right) = \frac{k}{k+2}$ optimal solution in $O(q^{O(k)}kn)$ time.

\subsection{Bounded-crossing-number Graphs}

For bounded-crossing-number graphs, $\Delta = \varphi+1$.
We obtain a planar graph $G'=(V',E')$ by replacing each edge crossing of $G$ by a new vertex. 
Construct a breadth first search tree $T$ of $G'$, rooted at any $v \in V'$.
The level of a vertex is defined as the distance from the vertex to the root $v$ of $T$.

For each level $i~(\bmod~\Lambda)$ in $T$ where $\Lambda = (\varphi+1)(k+2)$, we remove the levels from $i+1~(\bmod~\Lambda)$ to $i+\varphi~(\bmod~\Lambda)$ of $G'$. 
Then $G'$ is decomposed into several subgraphs $\mathcal{H}_i = \{H_1^i,\ldots,H_t^i\}$, where each $H_j^i = (N_j^i,E_j^i)$ that $1 \le j \le t$ 
contains at least $k+2-\varphi$ levels of $G$.
Let $V_j^i = N_j^i \cap V$ and $G_j^i = G[V_j^i]$ that represents the subgraph of $G$ induced by $V_j^i$. 
Since the number of crossings per edge is at most $\varphi$ and $\varphi$ consecutive levels of vertices are removed from $G'$, thus after the removal all the subgraphs $G_j^i$ are disjoint with each other.
%

By an observation of~\cite{grigoriev2007algorithms}, it satisfies that $tw(G_j^i) \ge 2 \cdot tw(H_j^i)+1$.
Since $H_j^i$ is embeddable on a bounded-genus surface and it is well-known that such graphs have linear local treewidth.
Thus we are able to deduce that each $G_j^i$ has treewidth $O(\varphi k) = O(k)$.
Therefore, by a similar argument, we cab achieve a $\left(1-\frac{\Delta+1}{k+2}\right) = \frac{k-\varphi}{k+2}$ optimal solution in $O(q^{O(k)}kn)$ time.

\section{Conclusion}

In this paper, we give EPTAS for energy maximization on planar graphs, bounded-local-treewidth graphs, $H$-minor-free graphs and bounded-crossing-number graphs. We also prove the inapproximability property for energy minimization. A clearer characterization for the complexity of energy minimization can be left as future research.





\end{document}